\documentclass[amsmath, 12pt]{article}
\usepackage{amsfonts}
\usepackage{amscd,amssymb,amsthm,amsmath}
\usepackage{euscript}
\usepackage{mathrsfs}
\usepackage{euscript}
\usepackage{esvect}
\usepackage{comment}
\usepackage{xcolor}
\usepackage{hyperref}
\usepackage{lmodern}
\usepackage{bm}
\usepackage{graphicx}

\setcounter{MaxMatrixCols}{10}

\evensidemargin 0.0 in \oddsidemargin -0.2 in
\parindent 24pt
\textheight 8.5 in \textwidth 6.9 in \baselineskip 9.0 in
\topmargin -0.4 in
\newtheorem{theorem}{Theorem}

\begin{document}

\title{New Three and Four-Dimensional Toric and Burst-Error-Correcting Quantum Codes}

\author{Cibele Cristina Trinca\thanks{The author is with the Department of Biotechnology and Bioprocess Engineering, Federal University of Tocantins, Gurupi-TO, Brazil (e-mail: cibtrinca@yahoo.com.br).}, Reginaldo Palazzo Jr.\thanks{The author is with the School of Electrical and Computer Engineering, State University of Campinas, Brazil (e-mail: palazzo@dt.fee.unicamp.br).}, Ricardo Augusto Watanabe\thanks{The author is with the School of Mathematics, Statistics and Scientific Computing, State University of Campinas, Brazil (e-mail: ricardoaw18@gmail.com).}, \\ Clarice Dias de Albuquerque\thanks{The author is with the Science and Technology Center, Federal University of Cariri, Juazeiro do Norte, Brazil (e-mail: clarice.albuquerque@ufca.edu.br).}, J. Carmelo Interlando\thanks{The author is with the Department of Mathematics and Statistics, San Diego State University, San Diego, CA, USA (e-mail: carmelo.interlando@sdsu.edu).} and \\ Antonio Aparecido de Andrade\thanks{The author is with the Department of Mathematics, S\~{a}o Paulo State University, Brazil (e-mail: antonio.andrade@unesp.br).}}

\date{\today}
\maketitle

\begin{abstract}
\noindent Ongoing research and experiments have enabled quantum memory to realize the storage of qubits. On the other hand, interleaving techniques are used to deal with burst of errors. Effective interleaving techniques for combating burst of errors by using classical error-correcting codes have been proposed in several articles found in the literature, however, to the best of our knowledge, little is known regarding interleaving techniques for combating clusters of errors in topological quantum error-correcting codes. Motivated by that, in this work, we present new three and four-dimensional toric quantum codes which are featured by lattice codes and apply a quantum interleaving method to such new three and four-dimensional toric quantum codes. By applying such a method to these new codes we provide new three and four-dimensional quantum burst-error-correcting codes. As a consequence, new three and four-dimensional toric and burst-error-correcting quantum codes are obtained which have better information rates than those three and four-dimensional toric quantum codes from the literature. In addition to these proposed three and four-dimensional quantum burst-error-correcting codes improve such information rates, they can be used for burst-error-correction in errors which are located, quantum data stored and quantum channels with memory. 
\end{abstract}

\noindent \textbf{Mathematics Subject Classification (2020)}: 81P45, 94A40, 94B15, 94B20. 

\paragraph{Index Terms:} Toric quantum code, hypercubic lattice, lattice code, quantum burst-error-correction, quantum interleaving.

\section{Introduction} 

Quantum system interactions with the environment around can destroy the superposition states causing loss of information. This process is called \textit{decoherence}. This problem can be overcome by isolating the system from its surroundings. However, for larger systems, this task is quite complicated. One way to overcome such a  difficulty is to use quantum error-correcting codes. These codes work by encoding the quantum states in order to make them resistant to the action of noise, and then decoding them at the moment in which the states are to be recovered. 

Constructions of quantum error-correcting codes are strongly based on their classical counterparts; however, there are some fundamental differences between classical and quantum information, e.g., (i) copying a qubit is physically impossible \cite{Kitaev1} and (ii) measurements destroy the quantum state in most cases, which in turn prevent its recovery \cite{Kitaev1}.  

Within the class of stabilizer quantum error-correcting codes, Kitaev introduced an alternative language using topology \cite{Kitaev1}. His proposal was to use certain properties of particles confined to a plane to perform topological quantum computation. That terminology comes from the fact that those properties are related to the topology of the physical system. Continuous deformations caused by the environment can alter those properties and, as consequence, we would naturally have quantum computation resistant to errors.

Kitaev started that investigation via toric codes \cite{Kitaev}. To construct them, qubits are associated to the edges of a square lattice on the torus; hence, the total number of edges is equal to the length of the code. The stabilizer operators are related to the vertices and the faces of the square lattice, and the coded qubits are determined according to the genus of the surface. Lastly, the distance of the code is determined by the homology group of the surface. Toric codes can be generalized to a class known as topological quantum codes by considering other two-dimensional orientable surfaces different from the torus.

In the quantum regime, quantum errors can be independent or correlated in space and time. Hence there are counterparts of quantum random error-correcting codes \cite{Wilde} and quantum burst error-correcting codes \cite{Kawabata}. Analogously to the classical case, quantum channels usually have memory \cite{Werner} or introduce errors which are located \cite{Caruso}, that is, quantum burst errors. 

The construction and investigation of quantum burst error-correcting codes have received far less attention compared to the development of standard quantum error-correcting codes or entanglement-assisted quantum error-correcting codes \cite{Wilde}. 

In quantum computing, quantum memory is the quantum-mechanical version of ordinary computer memory. Whereas ordinary memory stores information as binary states (represented by ``1"s and ``0"s), quantum memory stores a quantum state for later retrieval. These states hold useful computational information known as qubits. Unlike the classical memory of everyday computers, the states stored in quantum memory can be in a quantum superposition, giving much more practical flexibility in quantum algorithms than classical information storage.

Quantum memory is essential for the development of many devices in quantum information processing, including a synchronization tool that can match the various processes in a quantum computer, a quantum gate that maintains the identity of any state, and a mechanism for converting predetermined photons into on-demand photons. Quantum memory can be used in many aspects, such as quantum computing and quantum communication. Continuous research and experiments have enabled quantum memory to realize the storage of qubits \cite{Lvovsky}. 

Our contribution in this work is to present new three and four-dimensional toric quantum codes from lattice codes and apply a quantum interleaving method to such three and four-dimensional toric quantum codes. By applying such a method to these new codes we provide new three and four-dimensional quantum burst-error-correcting codes. As a consequence, new three and four-dimensional toric and burst-error-correcting quantum codes are obtained which have better information rates than those three and four-dimensional toric quantum codes from the literature.

This paper is organized as it follows. Sections II and III review previous results of lattice theory and toric quantum codes, respectively. Sections III and IV provide new three and four-dimensional toric quantum codes from lattice codes, respectively. In Section V a quantum interleaving method is applied to such new three and four-dimensional toric quantum codes to obtain new three and four quantum burst-error-correcting codes.

\section{Lattice}\label{lattice}
The background material presented in this section can be found in \cite{Conway,ijam1,ijam2,ijam3,ClariceArtigo,ClariceQIP,CibeleQIP}. 

Let $\{ \bm a_1, \ldots, \bm a_n\}$ be a basis for the $n$-dimensional real Euclidean space, $\mathbb R^n$, where $n$ is a positive integer. An $n$-dimensional lattice $\Lambda$ is the set of all points of the form $u_1 \bm a_1 + \cdots + u_n \bm a_n$ with $u_1,\ldots, u_n$ being integers. Thus, $\Lambda$ is a discrete additive subgroup of $\mathbb R^n$. This property leads to the study of subgroups (sublattices) and coset decompositions (partitions). An algebraic way to obtain sublattices from lattices is via a scaling matrix $A$ with integer entries. Given a lattice $\Lambda$, a sublattice $\Lambda ^{\prime}=A \Lambda$ can be obtained by transforming each vector $\lambda \in \Lambda$ to $\lambda ^{\prime} \in \Lambda ^{\prime}$ according to $\lambda ^{\prime} = A \lambda$.

Every building block that fills the entire space with one lattice point in each region is called a \textit{fundamental region} of the lattice $\Lambda$. There are several ways to choose a fundamental region for a lattice $\Lambda$, however the volume of the fundamental region is uniquely determined by $\Lambda$ \cite{ijam1,ClariceArtigo,forney}. Let $V(\Lambda)$ denote the volume of a fundamental region of the $n$-dimensional lattice $\Lambda$. For a sublattice $\Lambda ^{\prime} = A \Lambda$, we have that $\dfrac{V(A \Lambda)}{V(\Lambda)} = |\det A|$ and the set of the cosets of $\Lambda'$ in $\Lambda$ defines a \textit{lattice code} \cite{forney}.

\section{Toric Quantum Codes}\label{ToricCodes}

The material presented in this section can be found in \cite{Kitaev1, Kitaev, ClariceArtigo,CibeleQIP}. A quantum error-correcting code (QEC) is the image of a linear mapping from the $2^{k}$-dimensional Hilbert space $H^{k}$ to the $2^{n}$-dimensional Hilbert space $H^{n}$, where $k<n$. The codewords are the vectors in the $2^{n}$-dimensional space. The \textit{minimum distance d} of a quantum error-correcting code $C$ is the minimum distance between any two distinct codewords, that is, the minimum Hamming weight of a nonzero codeword. A quantum error-correcting code $C$ of length $n$, dimension $k$ and minimum distance $d$ is denoted by $[[n,k,d]]$. A code with minimum distance $d$ is able to correct up to $t$ errors, where $t=\lfloor \frac{d-1}{2} \rfloor$ \cite{Shor}. 

A stabilizer code $C$ is the simultaneous eigenspace with eigenvalue 1 comprising all the elements of an Abelian subgroup $S$ of the Pauli group $P_{n}$, called the \textit{stabilizer group}. The elements of the Pauli group on $n$ qubits are given by 
\begin{align*}
P_{n}=\{\pm I, \pm iI, \pm X, \pm iX, \pm Y, \pm iY, \pm Z, \pm iZ\}^{\otimes n}, \ \mathrm{where}
\end{align*}
\begin{align*}
I=\left( \begin{array}{cc}
                    1 & 0 \\
                    0 & 1 \\
                    \end{array}
                    \right), \ X=\sigma_{x}=\left( \begin{array}{cc}
                    0 & 1 \\
                    1 & 0 \\
                    \end{array}
                    \right),
\end{align*}
\begin{equation}
Y=\sigma_{y}=\left( \begin{array}{cc}
                    0 & -i \\
                    i & 0 \\
                    \end{array}
                    \right) \ \mathrm{and} \ Z=\sigma_{z}=\left( \begin{array}{cc}
                    1 & 0 \\
                    0 & -1 \\
                    \end{array}
                    \right).
\end{equation}

Thus, $C=\{ \mid\psi \rangle \in H^{n} \ \mid \ M\mid \psi \rangle = \ \mid\psi \rangle, \ \forall \ M\in S \}$ \cite{Gott}.

Kitaev's toric codes form a subclass of stabilizer codes and they are defined in a $q\times q$ square lattice of the torus (Figure 1). Qubits are in one-to-one correspondence with the edges of the square lattice. The parameters of this class of codes are $[[2q^{2},2,q]]$, where the code length $n$ equals the number of edges $|E|=2q^{2}$ of the square lattice. The number of encoded qubits is dependent on the genus of the orientable surface. In particular, codes constructed from orientable surfaces $gT$ (connected sum of $g$ tori $T$) encode $k=2g$ qubits. Thus, codes constructed from the torus, an orientable surface of genus 1, have $k=2$ encoded qubits. The distance is the minimum between the number of edges contained in the smallest homologically nontrivial cycle of the lattice and the number of edges contained in the smallest homologically nontrivial cycle of the dual lattice. Recall that the square lattice is self-dual and a homologically nontrivial cycle is a path of edges in the lattice which cannot be contracted to a face. Therefore the smallest of these two paths corresponds to the orthogonal axes either of the lattice or of the dual lattice. Consequently, $d=q$ \cite{DennisKitaev}.

\begin{figure}[ht]%
\centering
\includegraphics[width=0.5\textwidth]{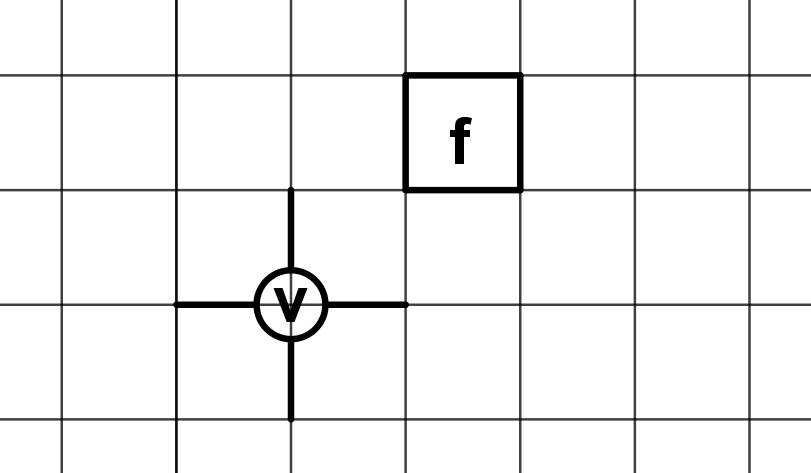}
\caption{Square lattice of the torus, from \cite{ClariceArtigo}.}
\label{fig:1}
\end{figure}

The stabilizer operators are associated with each vertex and each face of the square lattice (lattice) (Figure \ref{fig:1}). Given a vertex $v\in V$, the vertex operator $A_{v}$ is defined by the tensor product of $\sigma_{x}$ -- corresponding to each one of the four edges which have $v$ as a common vertex -- and the operator identity acting on the remaining qubits. Analogously, given a face $f\in F$, the face operator $B_{f}$ is defined by the tensor product $\sigma_{z}$ -- corresponding to each one of the four edges forming the boundary of the face $f$ -- and the operator identity acting on the remaining qubits. In particular, from \cite{ClariceArtigo},  
\begin{equation}    
A_{v}=\bigotimes_{j\in E} \sigma_{x}^{\delta (j\in E_{v})} \ \mathrm{and} \ B_{f}=\bigotimes_{j\in E} \sigma_{z}^{\delta (j\in E_{f})},
\end{equation}
\noindent where $\delta$ is the Kronecker delta. 

The toric code consists of the space fixed by the operators $A_{v}$ and $B_{f}$ and it is given as 
\begin{equation}
C=\{\mid\psi \rangle \in H^{n} \ \mid \ A_{v}\mid\psi \rangle = \ \mid\psi \rangle \ \mathrm{and} \ B_{f} \mid\psi \rangle = \ \mid\psi \rangle , \ \forall \ v,f \}.
\end{equation}

The dimension of $C$ is 4, that is, $C$ encodes $k=2$ qubits.

\section{New Three-Dimensional Toric Quantum Code from a Lattice Code}\label{three}

Three-dimensional toric quantum codes are studied in \cite{Temperature}. Under the algebraic point of view, such toric quantum codes can be characterized as the group consisting of the cosets of $q \mathbb{Z}^{3}$ in $\mathbb{Z}^{3}$ \cite{Edson}, which in turn is isomorphic to $\mathbb{Z}_{q} \times \mathbb{Z}_{q} \times \mathbb{Z}_{q} = \mathbb{Z}_{q}^{3}$, where $q$ is a positive integer. The identifications of the opposite faces of the region delimited by $\mathbb{Z}_{q} \times \mathbb{Z}_{q} \times \mathbb{Z}_{q}$ result in its identification with the three-dimensional torus denoted by $T^{3}$ which has genus $g=1$; for the sake of simplicity, we call this region \textit{lattice} or $q\times q \times q = q^{3}$ \textit{cubic lattice}. The volume associated with the lattice $\mathbb{Z}_{q} \times \mathbb{Z}_{q} \times \mathbb{Z}_{q}$ is $q^{3}$. 

Since each face belongs simultaneously to two cubes of the $q^{3}$ cubic lattice (lattice), there are
$3q^{3}$ faces, that is, $n=3q^{3}$ qubits, where $n$ is the length of the code that is given by the number of faces of the $q^{3}$ cubic lattice. In the construction of these three-dimensional toric quantum codes \cite{Temperature}, such qubits are in a biunivocal correspondence with the faces of the $q^{3}$ cubes of the $q^{3}$ cubic lattice. Since $\mathbb{Z}^{3}$ is a self-dual lattice, then the code distance which is defined as being the minimum number of faces in the $q^{3}$ cubic lattice between two codewords is equal to $q$ \cite{Edson,Temperature}. Consequently, this class of codes has parameters $[[3q^{3},3,q]]$. 

For the rest of this section, let $q=7$. The goal of this section is to construct a three-dimensional toric quantum code by using the fundamental region of a sublattice of the lattice $\mathbb{Z}^{3}$ that has volume 7, since 7 divides the volume $q^{3}=7^{3}$ of the lattice $\mathbb{Z}_{7} \times \mathbb{Z}_{7} \times \mathbb{Z}_{7}$. 

The authors in \cite{perfcodes} construct a classical single-error-correcting code in dimension three in the $7^{3}$ cubic lattice which has 49 codewords. Each codeword of this classical code has the Lee sphere of radius 1 in 3 dimensions (heptacube) as being its fundamental region and the cube in the center of the heptacube corresponds geometrically to the codeword. Also in \cite{perfcodes} it is conjectured that this is the only case for which a close-packing exists in dimension 3.  

Since the $7^{3}$ cubic lattice has a finite number of qubits, we use operations modulo 7 to guarantee that the qubits of a given coset (heptacube) remain inside the $7^{3}$ cubic lattice.

The $7^{3}$ cubic lattice consists of seven $7\times 7$ arrays which are $7\times 7$ cross-sections, where each square of these arrays means a cube. It is shown in \cite{ijam} that the seven codewords of each $7\times 7$ array can be generated by the vector $(0 \ 1 \ 4)$, that is, we can put them apart by one unit to the right in the horizontal direction and four units down in the vertical one. Now to rise in the third dimension to change from one array to the one above to continue the construction of the corresponding codewords we use the vector $(1 \ 0 \ 2)$ as the corresponding generator. Consequently, by using these vectors, we can construct the corresponding 49 codewords from the $7^{3}$ cubic lattice and the $7\times 7$ cross-sections can be labeled by the numbers 0, 1, 2, 3, 4, 5 and 6.

By knowing these generator vectors, we obtain the following result. 

\begin{theorem}\label{3LatticeCode}
The classical single-error-correcting code in the $7^{3}$ cubic lattice can be characterized as a lattice code. 
\end{theorem}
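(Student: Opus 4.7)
The plan is to realize the $49$-element codeword set as the quotient $\Lambda/\Lambda'$ of two nested rank-$3$ integer lattices, and then invoke the definition of lattice code from Section \ref{lattice}.

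First, I would let $\Lambda$ be the sublattice of $\mathbb{Z}^{3}$ generated by $\bm{a}_{1}=(0,1,4)$, $\bm{a}_{2}=(1,0,2)$ and $\bm{a}_{3}=(0,0,7)$: the first two are the generators of the $49$ codewords given in the paragraph preceding the theorem, while $\bm{a}_{3}$ is adjoined both to make $\Lambda$ of full rank $3$ and to guarantee that $\Lambda$ contains the torus-periodicity sublattice $7\mathbb{Z}^{3}$. Forming the $3\times 3$ matrix $A$ with rows $\bm{a}_1,\bm{a}_2,\bm{a}_3$, a direct cofactor expansion gives $|\det A|=7$, so by Section \ref{lattice} we have $V(\Lambda)=7$.

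Next, I would set $\Lambda':=7\mathbb{Z}^{3}$, for which $V(\Lambda')=343$. The identities $(7,0,0)=7\bm{a}_{2}-2\bm{a}_{3}$, $(0,7,0)=7\bm{a}_{1}-4\bm{a}_{3}$ and $(0,0,7)=\bm{a}_{3}$ show that $7\mathbb{Z}^{3}\subset\Lambda$, so $\Lambda'\subset\Lambda$, and the volume ratio yields $[\Lambda:\Lambda']=343/7=49$. A natural system of coset representatives for $\Lambda/\Lambda'$ is then $\{m\bm{a}_{1}+n\bm{a}_{2}\bmod 7\mathbb{Z}^{3}: 0\le m,n\le 6\}$, and injectivity of this parametrization is immediate because, on the first two coordinates, $m\bm{a}_{1}+n\bm{a}_{2}=(n,m,\ast)$.

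Finally, I would match the $49$ cosets of $\Lambda'$ in $\Lambda$ pointwise with the $49$ codewords listed in the preceding paragraph — seven codewords per $7\times 7$ cross-section generated by $\bm{a}_{1}$, and seven cross-sections traversed by $\bm{a}_{2}$. Together with the previous two steps, this realizes the code as $\Lambda/\Lambda'$, which is a lattice code in the sense of Section \ref{lattice}.

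The step I expect to be the main obstacle is not the algebra — the index computation and the injectivity are brief direct checks — but rather the geometric matching in the last step, namely confirming that the $49$ algebraic representatives of $\Lambda/\Lambda'$ coincide as points of the $7^{3}$ cubic lattice with the $49$ heptacube centers of the perfect code of \cite{perfcodes}. Once the coordinate ordering in $\bm{a}_{1},\bm{a}_{2}$ is fixed and the enumeration in \cite{ijam} is replayed, this becomes a finite verification whose correctness is already implicit in the construction.
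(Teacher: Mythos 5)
Your proposal is correct and follows essentially the same route as the paper: both realize the code as the $49$-element quotient $\Lambda/7\mathbb{Z}^{3}$ of a full-rank, index-$7$ sublattice $\Lambda\subset\mathbb{Z}^{3}$ containing the code generators $(0,1,4)$ and $(1,0,2)$, verify the containment $7\mathbb{Z}^{3}\subset\Lambda$, compute the index $7^{3}/7=49$, and identify the cosets with the codewords. The only cosmetic difference is your choice of third generator $(0,0,7)$ in place of the paper's $(0,2,1)$ taken from the $2\times 2$ matrix of \cite{CibeleQIP}; since $(0,0,7)=-(0,2,1)+2(0,1,4)$ and both $3\times 3$ matrices have determinant of absolute value $7$, the two bases span the same sublattice $A\mathbb{Z}^{3}$.
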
  
\begin{proof} The authors in \cite{CibeleQIP} provide for $q=7$ the $2\times 2$ matrix $\left(
                 \begin{array}{cc}
                   2 & 1 \\
                   1 & 4 \\
                 \end{array}
               \right)$ whose line vectors generate the codewords of each $7\times 7$ array (cross-section) from the $7^{3}$ cubic lattice. Therefore, since the vector $(1 \ 0 \ 2)$ is another generator of the classical single-error-correcting code from the $7^{3}$ cubic lattice, we complete this matrix with the generator $(1 \ 0 \ 2)$ to obtain the $3\times 3$ matrix $A=\left(
                 \begin{array}{ccc}
                   0 & 2 & 1 \\
                   0 & 1 & 4 \\
                   1 & 0 & 2 \\
                 \end{array}
               \right)$ to the dimension 3 that, consequently, generates the sublattice $A \mathbb{Z}^{3}$ from $\mathbb{Z}^{3}$. Therefore, $\dfrac{V(A \mathbb{Z}^{3})}{V(\mathbb{Z}^{3})} = |\det A| = 7$, since $|\det A|=7$. 
               
As $V(\mathbb{Z}^{3})=1$, then $V(A \mathbb{Z}^{3})=7$, that is, the volume of the fundamental region of the sublattice $A \mathbb{Z}^{3}$ of the lattice $\mathbb{Z}^{3}$ is 7. From there, it is possible to observe that $\dfrac{V(7 \mathbb{Z}^{3})}{V(\mathbb{Z}^{3})}=7^{3}>7=\dfrac{V(A \mathbb{Z}^{3})}{V(\mathbb{Z}^{3})}$ and, thus, $V(7 \mathbb{Z}^{3})=7^{3}>7=V(A \mathbb{Z}^{3})$.

Hereupon it is possible to show that $A \mathbb{Z}^{3}\supset 7 \mathbb{Z}^{3}$, in fact: let $(7\alpha,7\beta,7\gamma)$, where $\alpha,\beta,\gamma \in \mathbb{Z}$, an arbitrary element of $7 \mathbb{Z}^{3}$. Then it is needed to show that there exists $v=(x,y,z)\in \mathbb{Z}^{3}$ such that $v\cdot A=(7\alpha,7\beta,7\gamma)$. Thenceforth, by straightforward computations, we obtain the solution $x=2\alpha+4\beta-\gamma$, $y=-4\alpha-\beta+2\gamma$ and $z=7\alpha$. Consequently, there exists  $v=(x,y,z)\in \mathbb{Z}^{3}$ such that $v\cdot A=(7\alpha,7\beta,7\gamma)$ and, therefore, $A \mathbb{Z}^{3}\supset 7 \mathbb{Z}^{3}$.  

Now since $V(7 \mathbb{Z}^{3})=7^{3}>7=V(A \mathbb{Z}^{3})$ and $A \mathbb{Z}^{3}\supset 7 \mathbb{Z}^{3}$, then we have the following nested lattice chain
\begin{equation}\label{chain1}
\mathbb{Z}^{3}\supset A \mathbb{Z}^{3}\supset 7 \mathbb{Z}^{3}.
\end{equation}

From \cite{forney} we obtain $\mid \mathbb{Z}^{3} / 7 \mathbb{Z}^{3} \mid =\dfrac{V(7 \mathbb{Z}^{3})}{V(\mathbb{Z}^{3})}=7^{3}$ and $\mid \mathbb{Z}^{3} / A \mathbb{Z}^{3} \mid=\dfrac{V(A \mathbb{Z}^{3})}{V(\mathbb{Z}^{3})}=7$, consequently, $\mid A \mathbb{Z}^{3} / 7 \mathbb{Z}^{3} \mid=\dfrac{V(7 \mathbb{Z}^{3})}{V(A \mathbb{Z}^{3})}=\dfrac{7^{3}}{7}=7^{2}=49$. Therefore the lattice quotient $A \mathbb{Z}^{3} / 7 \mathbb{Z}^{3}$ is the group consisting of the 49 cosets of $7 \mathbb{Z}^{3}$ in $A \mathbb{Z}^{3}$ and the set of these 49 cosets defines a lattice code. 

Now as the line vectors of the matrix $A$ which generates the sublattice $A \mathbb{Z}^{3}$ are the generators of the classical code from the $7^{3}$ cubic lattice and the lattice quotient $A \mathbb{Z}^{3} / 7 \mathbb{Z}^{3}$ provides operations modulo 7 over the respective vectors, then there exists a natural group  isomorphism between the classical code and the lattice code $A \mathbb{Z}^{3} / 7 \mathbb{Z}^{3}$, consequently, such a classical code is characterized by the lattice code $A \mathbb{Z}^{3} / 7 \mathbb{Z}^{3}$. 
\end{proof}

From this lattice code a new three-dimensional toric quantum code can be featured; in fact, the corresponding three-dimensional toric quantum code which is associated with the fundamental region as being the heptacube is established in the same way as it is in the three-dimensional toric quantum code for $q=7$ studied in \cite{Temperature}. Thenceforward, as in \cite{BombinDelgado}, the respective code length is decreased and it is given by the number of faces of the heptacube, that is, $n=3q=21$, since the heptacube has volume $q=7$ and each face belongs simultaneously to two cubes of the $q^{3}=7^{3}=343$ cubic lattice (lattice). The code dimension is $k=3$, since this three-dimensional toric quantum code is constructed on the $T^{3}$ torus which has genus $g=1$. From the fact that the $\mathbb{Z}^{3}$-lattice is self-dual, the code distance is defined as the minimum number of faces in the $q^{3}=7^{3}=343$ cubic lattice between two codewords. Such a distance is called \textit{Mannheim distance} and it is given by $d_{M} = min \{ \mid x \mid + \mid y \mid + \mid z \mid \ \ \mid \ (x,y,z) \in \mathcal{C} \}$, where $\mathcal{C}$ indicates the set of the 49 codewords and $w_{M} (x,y,z) = \mid x \mid + \mid y \mid + \mid z \mid$ is known as the \textit{Mannheim weight} of $(x,y,z)$. 

\begin{theorem}\label{3distance}
The minimum Mannheim distance of the corresponding three-dimensional toric quantum code is given by $3$.
\end{theorem}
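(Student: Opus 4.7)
The plan is to combine the perfect-packing structure established in Theorem~\ref{3LatticeCode} with a direct exhibition of a codeword of Mannheim weight~$3$. Recall that the heptacube, the Lee sphere of radius one in $\mathbb{Z}^3$, serves as the fundamental region of the sublattice $A\mathbb{Z}^3$ inside $\mathbb{Z}^3$, so the $49$ translates of the heptacube indexed by the cosets in $A\mathbb{Z}^3 / 7\mathbb{Z}^3$ tile the $7\times 7\times 7$ cubic lattice exactly. This is precisely the statement that the code is a perfect single-error-correcting code in the Lee (Mannheim) metric, with packing radius one.

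For the upper bound, I would exhibit a nonzero codeword of Mannheim weight $3$. The first row of $A$, namely $(0,2,1)$, lies in $A\mathbb{Z}^3$ and its coordinates already lie in $\{-3,\ldots,3\}$, so it is its own minimum-weight representative in $A\mathbb{Z}^3 / 7\mathbb{Z}^3$, with $w_M(0,2,1)=0+2+1=3$. Since this coset is nonzero (because $(0,2,1)\notin 7\mathbb{Z}^3$), it follows that $d_M\leq 3$.

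For the lower bound, I would argue by contradiction: suppose there exists a nonzero codeword $c$ with $w_M(c)\leq 2$. A quick case split on the two possibilities for $c$ (either one coordinate of absolute value $2$ and the others zero, or exactly two coordinates of absolute value $1$) produces a lattice point of the $7^3$ cubic lattice lying within Mannheim distance $1$ of both $0$ and $c$, for instance $(1,0,0)$ when $c=(2,0,0)$ or $c=(1,1,0)$. Such a point then belongs simultaneously to the heptacubes centered at $0$ and at $c$, contradicting the fact that the $49$ heptacubes tile the $7^3$ cubic lattice without overlap. Hence $d_M\geq 3$, and combined with the upper bound $d_M=3$.

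The step requiring most care is the lower bound: one must be sure that the weight $w_M(c)$ computed on the minimum-weight representative of the coset is truly the quantity that controls overlap of the two heptacubes inside the torus $\mathbb{Z}^3/7\mathbb{Z}^3$. This amounts to verifying that, for any point $p$ with $\|p\|_1\leq 1$ and any coset representative $c$ of Mannheim weight at most $2$, the translate $p-c$ can be reduced modulo $7\mathbb{Z}^3$ to a vector of $L^1$-norm at most $1$; the case analysis above makes this explicit, and it is the only place where the particular value $q=7$ and the perfectness of the packing enter essentially.
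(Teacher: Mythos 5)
Your proof is correct but takes a genuinely different route from the paper's. The paper obtains the result by citing Lemmas 1, 2 and Example 1 of the reference \cite{CibeleQIP} to get minimum Mannheim distance $3$ within each $7\times 7$ cross-section, and then observes that the third generator $(1\ 0\ 2)$, which moves between cross-sections, itself has Mannheim weight $3$; from these two facts it concludes $d_M=3$. You instead split the claim into an upper bound, witnessed by the explicit codeword $(0,2,1)$ (the first row of $A$, already coordinate-wise reduced since all entries lie in $\{-3,\dots,3\}$), and a lower bound derived from the perfectness of the packing: a nonzero coset of Mannheim weight at most $2$ would force two heptacubes to share a cube, contradicting the tiling of the $7^3$ lattice by the $49$ Lee spheres of radius $1$. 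Your argument is the more self-contained and, in one respect, the more airtight of the two: the paper's final step does not explicitly rule out that some integer combination of the in-plane generators with $(1\ 0\ 2)$ reduces modulo $7$ to a vector of weight less than $3$, whereas your packing argument disposes of all nonzero cosets uniformly. The trade-off is that you lean essentially on the perfect single-error-correcting (close-packing) property of the heptacube tiling, which the paper imports from \cite{perfcodes} and Theorem \ref{3LatticeCode} rather than reproving; your own caveat about reducing representatives modulo $7\mathbb{Z}^3$ is handled correctly since $q=7$ makes every coordinate in $\{-3,\dots,3\}$ its own minimal representative.
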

\begin{proof}
From \cite{CibeleQIP}, the line vectors of the $2\times 2$ matrix $\left(
                 \begin{array}{cc}
                   2 & 1 \\
                   1 & 4 \\
                 \end{array}
               \right)$ generate the codewords of each $7\times 7$ array (cross-section) from the $7^{3}$ cubic lattice. On the other hand, in \cite{CibeleQIP}, Lemma 1, Lemma 2 and Example 1 provide the minimum Mannheim distance of the codewords of each $7\times 7$ array (cross-section) which is given by 3.  
               
As we have seen previously we use the vector $(1 \ 0 \ 2)$ which is the corresponding generator in the vertical direction to rise in the third dimension to change from one array to the one above to continue the construction of the corresponding codewords.

Since the minimum Mannheim distance of the codewords of each $7\times 7$ array (cross-section) is given by 3 and the Mannheim weight of $(1 \ 0 \ 2)$ is 3, then the minimum Mannheim distance of the corresponding three-dimensional toric quantum code is given by $3$.
\end{proof}

From there, the parameters of the new three-dimensional toric quantum code are $[[3q=21,3,3]]$.

Consequently, under the algebraic point of view, such a new three-dimensional toric quantum code can be characterized as the group consisting of the cosets of $7 \mathbb{Z}^{3}$ in $A \mathbb{Z}^{3}$. 

The duality of the lattice in which a toric quantum code is constructed influences the error correction pattern \cite{LivroBombin}, that is, if the corresponding lattice is self-dual, then the quantum channel without memory is symmetric; but if the corresponding lattice is not self-dual, then the quantum channel without memory is not symmetric. Therefore, as the $\mathbb{Z}^{3}$-lattice is self-dual, then the respective quantum channel without memory is symmetric.

\section{New Four-Dimensional Toric Quantum Code from a Lattice Code}\label{four}

Four-dimensional toric quantum codes are studied in \cite{4D}. Under the algebraic point of view, such toric quantum codes can be characterized as the group consisting of the cosets of $q \mathbb{Z}^{4}$ in $\mathbb{Z}^{4}$ \cite{Edson}, which in turn is isomorphic to $\mathbb{Z}_{q} \times \mathbb{Z}_{q} \times \mathbb{Z}_{q} \times \mathbb{Z}_{q} = \mathbb{Z}_{q}^{4}$ (4-dimensional hypercube), where $q$ is a positive integer. The identifications of the opposite hyperfaces of the region delimited by $\mathbb{Z}_{q}^{4}$ result in its identification with the four-dimensional torus denoted by $T^{4}$ which has genus $g=1$; for the sake of simplicity, we call this region \textit{lattice} or $q\times q \times q \times q = q^{4}$ \textit{hypercubic lattice}. The hypervolume associated with the lattice $\mathbb{Z}_{q}^{4}$ is $q^{4}$. 

By following the reasoning of the last section, there are $6q^{4}$ hyperfaces in the $q^{4}$ hypercubic lattice, that is, $n=6q^{4}$ qubits, where $n$ is the length of the code that is given by the number of hyperfaces of the $q^{4}$ hypercubic lattice. In the construction of these four-dimensional toric quantum codes \cite{4D}, such qubits are in a biunivocal correspondence with the hyperfaces of the $q^{4}$ hypercubes of the $q^{4}$ hypercubic lattice. Since $\mathbb{Z}^{4}$ is a self-dual lattice, then the code distance which is defined as being the minimum number of hyperfaces in the $q^{4}$ hypercubic lattice between two codewords is equal to $q^{2}$ \cite{4D,Edson}. Consequently, this class of codes has parameters $[[6q^{4},6,q^{2}]]$.

In this section, let $q=9$. The goal of this section is to construct a four-dimensional toric quantum code by using the fundamental region of a sublattice of the lattice $\mathbb{Z}^{4}$ that has hypervolume 9, since 9 divides the hypervolume $q^{4}=9^{4}$ of the lattice $\mathbb{Z}_{9}^{4}$. 

In \cite{perfcodes} the authors construct a classical single-error-correcting code in dimension four in the $9^{4}$ hypercubic lattice which has $729=9^{3}=q^{3}$ codewords. Each codeword of this classical code has the Lee sphere of radius 1 in 4 dimensions as being its fundamental region and the hypercube in the center of such Lee sphere corresponds geometrically to the codeword. Therefore each codeword can be featured as a central hypercube which has 8 hyperfaces to which another hypercube has been affixed to each of its hyperfaces. In \cite{perfcodes} it is also conjectured that this is the only case for which a close-packing exists in dimension 4.  

Since the $9^{4}$ hypercubic lattice has a finite number of qubits, we use operations modulo 9 to guarantee that the qubits of a given coset (Lee sphere of radius 1 in 4 dimensions (fundamental region)) remain inside the $9^{4}$ hypercubic lattice.

The $9^{4}$ hypercubic lattice consists of nine $9\times 9\times 9=9^{3}$ hypertorus which are $9\times 9\times 9=9^{3}$ cross-sections. Now each $9^{3}$ hypertorus consists of nine $9\times 9$ arrays, where each square of these arrays means a cube. It is shown in \cite{ijam} that the 9 codewords of each $9\times 9$ array can be generated by the vector $(0 \ 0 \ 1 \ 6)$, that is, we can put them apart by one unit to the right in the horizontal direction and six units down in the vertical one. Now to rise in the third dimension to change from one array to the one above to continue the construction of the 81 codewords of the corresponding hypertorus we use the vector $(0 \ 1 \ 1 \ 1)$ as the corresponding generator vector. Hence to generate all the 81 codewords of a cross-section (hypertorus) we must use the generator vectors $(0 \ 0 \ 1 \ 6)$ and $(0 \ 1 \ 1 \ 1)$. Finally to rise in the fourth dimension to change from one cross-section to the other above to continue the construction of the corresponding 729 codewords we use the vector $(1 \ 0 \ 0 \ 2)$ as the corresponding generator vector. Consequently, by using these vectors, we can construct the corresponding 729 codewords from the $9^{4}$ hypercubic lattice and the $9^{3}$ cross-sections can be labeled by the numbers 0, 1, 2, 3, 4, 5, 6, 7 and 8.

By knowing these generator vectors, we obtain the following result. 

\begin{theorem}\label{4LatticeCode}
The classical single-error-correcting code in the $9^{4}$ hypercubic lattice can be characterized as a lattice code. 
\end{theorem}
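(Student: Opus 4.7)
The plan is to mirror the argument used for Theorem \ref{3LatticeCode}, simply stepping up one dimension. First I would assemble a $4\times 4$ integer matrix $A$ whose row vectors generate the classical single-error-correcting code on the $9^{4}$ hypercubic lattice. Three of the four rows are already produced in the text leading up to the theorem: the in-array generator $(0\ 0\ 1\ 6)$, which produces the nine codewords of each $9\times 9$ array; the between-array generator $(0\ 1\ 1\ 1)$, which lifts the construction through a hypertorus; and $(1\ 0\ 0\ 2)$, which lifts across hypertori. For the fourth row I would invoke the $q=9$ counterpart of the $2\times 2$ matrix from \cite{CibeleQIP} used in the proof of Theorem \ref{3LatticeCode}, namely a second in-array generator of the form $(0\ 0\ a\ b)$ chosen so that together with $(1\ 6)$ it spans a $2\times 2$ block of determinant $\pm 9$. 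After a suitable row permutation, $A$ becomes block lower-triangular with a $2\times 2$ in-array block, a single between-array row, and a single between-hypertorus row.

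With $A$ in hand, a cofactor expansion along the first column picks up only the single nonzero entry coming from the row $(1\ 0\ 0\ 2)$, and a further expansion in the resulting $3\times 3$ minor collapses onto the in-array $2\times 2$ block of determinant $\pm 9$. This gives $|\det A|=9$, and hence, since $V(\mathbb{Z}^{4})=1$, the identities $V(A\mathbb{Z}^{4})=9$ and $V(9\mathbb{Z}^{4})=9^{4}>9=V(A\mathbb{Z}^{4})$ follow immediately, exactly paralleling the corresponding display in the proof of Theorem \ref{3LatticeCode}.

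Next I would verify the inclusion $A\mathbb{Z}^{4}\supset 9\mathbb{Z}^{4}$. Given an arbitrary $(9\alpha,9\beta,9\gamma,9\delta)\in 9\mathbb{Z}^{4}$, the task is to produce $v=(x,y,z,w)\in\mathbb{Z}^{4}$ with $v\cdot A=(9\alpha,9\beta,9\gamma,9\delta)$. Because $|\det A|=9$ and $9$ divides every entry on the right, the adjugate formula $v=(9\alpha,9\beta,9\gamma,9\delta)\,A^{-1}$ is forced to return integer entries, in complete analogy with the explicit solution $x=2\alpha+4\beta-\gamma$, $y=-4\alpha-\beta+2\gamma$, $z=7\alpha$ obtained in the three-dimensional case. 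This is the most computational step and, in my view, the main obstacle of the proof: one has to write out each component of $v$ explicitly and inspect its integrality, and with a $4\times 4$ matrix the bookkeeping is noticeably heavier than in the $3\times 3$ situation.

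Once the nested lattice chain $\mathbb{Z}^{4}\supset A\mathbb{Z}^{4}\supset 9\mathbb{Z}^{4}$ is in place, the counting argument from \cite{forney} yields
\begin{equation*}
\bigl|A\mathbb{Z}^{4}/9\mathbb{Z}^{4}\bigr| \;=\; \frac{V(9\mathbb{Z}^{4})}{V(A\mathbb{Z}^{4})} \;=\; \frac{9^{4}}{9} \;=\; 9^{3} \;=\; 729,
\end{equation*}
which matches the cardinality of the classical single-error-correcting code exactly. Since the rows of $A$ were selected to be precisely the generators of that classical code, and the passage to the quotient $A\mathbb{Z}^{4}/9\mathbb{Z}^{4}$ implements the same modular-$9$ arithmetic that confines the cosets (Lee spheres of radius $1$) to the $9^{4}$ hypercubic lattice, the assignment sending each coset of the classical code to its corresponding coset in $A\mathbb{Z}^{4}/9\mathbb{Z}^{4}$ is a well-defined group isomorphism, which characterizes the classical code as the lattice code $A\mathbb{Z}^{4}/9\mathbb{Z}^{4}$ and completes the proof.
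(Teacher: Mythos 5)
Your proposal follows essentially the same route as the paper's proof: the same $4\times 4$ generator matrix (the paper takes the second row to be $(0\ 0\ {-1}\ 3)$, completing the in-array block $\bigl(\begin{smallmatrix}1 & 6\\ -1 & 3\end{smallmatrix}\bigr)$ of determinant $9$ from the cited reference), the same computation $|\det A|=9$ and the volume identities, the same nested chain $\mathbb{Z}^{4}\supset A\mathbb{Z}^{4}\supset 9\mathbb{Z}^{4}$, the same index count $|A\mathbb{Z}^{4}/9\mathbb{Z}^{4}|=729$, and the same concluding isomorphism with the classical code. The only divergence is in verifying $9\mathbb{Z}^{4}\subset A\mathbb{Z}^{4}$: the paper exhibits the explicit integer solution $x=2\alpha-4\beta+3\gamma+\delta$, $y=2\alpha+5\beta-6\gamma+\delta$, $z=9\beta$, $w=9\alpha$, whereas your adjugate observation (since $|\det A|=9$, the matrix $9A^{-1}=\pm\,\mathrm{adj}(A)$ has integer entries) already settles integrality with no componentwise bookkeeping, so the ``main obstacle'' you flag is in fact already dispatched by your own argument.
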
  
\begin{proof}
By following the reasoning of Theorem \ref{3LatticeCode}, the authors in \cite{CibeleQIP} provide for $q=9$ the $2\times 2$ matrix $\left(
                 \begin{array}{cc}
                   1 & 6 \\
                   -1 & 3 \\
                 \end{array}
               \right)$ whose line vectors generate the 9 codewords of each $9\times 9$ array from the $9^{3}$ cross-sections (hypertorus). Therefore, since the vectors $(0 \ 1 \ 1 \ 1)$ and  $(1 \ 0 \ 0 \ 2)$ are other two generator vectors of the classical single-error-correcting code from the $9^{4}$ hypercubic lattice, we complete this matrix with these generators to obtain the $4\times 4$ matrix $A=\left(
                 \begin{array}{cccc}
                   0 & 0 & 1 & 6 \\
                   0 & 0 & -1 & 3 \\
                   0 & 1 & 1 & 1 \\
                   1 & 0 & 0 & 2 \\
                 \end{array}
               \right)$ to the dimension 4 that, consequently, generates the sublattice $A \mathbb{Z}^{4}$ from $\mathbb{Z}^{4}$. Therefore, since $ \mid \det A \mid =9$, $\dfrac{V(A \mathbb{Z}^{4})}{V(\mathbb{Z}^{4})} = \mid \det A \mid = 9$. 
               
As $V(\mathbb{Z}^{4})=1$, then $V(A \mathbb{Z}^{4})=9$, that is, the volume of the fundamental region of the sublattice $A \mathbb{Z}^{4}$ of the lattice $\mathbb{Z}^{4}$ is 9. From there, it is possible to observe that $\dfrac{V(9 \mathbb{Z}^{4})}{V(\mathbb{Z}^{4})}=9^{4}>9=\dfrac{V(A \mathbb{Z}^{4})}{V(\mathbb{Z}^{4})}$ and, thus, $V(9 \mathbb{Z}^{4})=9^{4}>9=V(A \mathbb{Z}^{4})$.

Hereupon it is possible to show that $A \mathbb{Z}^{4}\supset 9 \mathbb{Z}^{4}$, in fact: let $(9\alpha,9\beta,9\gamma,9\delta)$, where $\alpha,\beta,\gamma,\delta \in \mathbb{Z}$, an arbitrary element of $9 \mathbb{Z}^{4}$. Then it is needed to show that there exists $v=(x,y,z,w)\in \mathbb{Z}^{4}$ such that $v\cdot A=(9\alpha,9\beta,9\gamma,9\delta)$. Thenceforth, by straightforward computations, we obtain the solution $x=2\alpha-4\beta+3\gamma+\delta$, $y=2\alpha+5\beta-6\gamma+\delta$, $z=9\beta$ and $w=9\alpha$. Consequently, there exists  $v=(x,y,z,w)\in \mathbb{Z}^{4}$ such that $v\cdot A=(9\alpha,9\beta,9\gamma,9\delta)$ and, therefore, $A \mathbb{Z}^{4}\supset 9 \mathbb{Z}^{4}$.  

Now since $V(9 \mathbb{Z}^{4})=9^{4}>9=V(A \mathbb{Z}^{4})$ and $A \mathbb{Z}^{4}\supset 9 \mathbb{Z}^{4}$, then we have the following nested lattice chain
\begin{equation}\label{chain2}
\mathbb{Z}^{4}\supset A \mathbb{Z}^{4}\supset 9 \mathbb{Z}^{4}.
\end{equation}

From \cite{forney} we obtain $\mid \mathbb{Z}^{4} / 9 \mathbb{Z}^{4} \mid =\dfrac{V(9 \mathbb{Z}^{4})}{V(\mathbb{Z}^{4})}=9^{4}$ and $\mid \mathbb{Z}^{4} / A \mathbb{Z}^{4} \mid=\dfrac{V(A \mathbb{Z}^{4})}{V(\mathbb{Z}^{4})}=9$, consequently, $\mid A \mathbb{Z}^{4} / 9 \mathbb{Z}^{4} \mid=\dfrac{V(9 \mathbb{Z}^{4})}{V(A \mathbb{Z}^{4})}=\dfrac{9^{4}}{9}=9^{3}=729$. Therefore the lattice quotient $A \mathbb{Z}^{4} / 9 \mathbb{Z}^{4}$ is the group consisting of the 729 cosets of $9 \mathbb{Z}^{4}$ in $A \mathbb{Z}^{4}$ and the set of these 729 cosets defines a lattice code. 

Now as the line vectors of the matrix $A$ which generates the sublattice $A \mathbb{Z}^{4}$ are the generators of the classical code from the $9^{4}$ hypercubic lattice and the lattice quotient $A \mathbb{Z}^{4} / 9 \mathbb{Z}^{4}$ provides operations modulo 9 over the respective vectors, then there exists a natural group isomorphism between the classical code and the lattice code $A \mathbb{Z}^{4} / 9 \mathbb{Z}^{4}$, consequently, such a classical code is characterized by the lattice code $A \mathbb{Z}^{4} / 9 \mathbb{Z}^{4}$. 
\end{proof}

From this lattice code a new four-dimensional toric quantum code can be featured; in fact, the corresponding four-dimensional toric quantum code which is associated with the fundamental region as being the Lee sphere of radius 1 in 4 dimensions is established in the same way as it is in the four-dimensional toric quantum code for $q=9$ studied in \cite{4D}. Thenceforward, as in \cite{BombinDelgado}, the respective code length is decreased and it is given by the number of hyperfaces of the corresponding Lee sphere of radius 1, that is, $n=6q=54$, since such a Lee sphere has hypervolume $q=9$ and each hyperface belongs simultaneously to two hypercubes of the $q^{4}=9^{4}=6561$ hypercubic lattice (lattice). The code dimension is $k=6$, since this four-dimensional toric quantum code is constructed on the $T^{4}$ torus which has genus $g=1$. From the fact that the $\mathbb{Z}^{4}$-lattice is self-dual, the code distance is defined as the minimum number of hyperfaces in the $q^{4}=9^{4}=6561$ hypercubic lattice between two codewords. Such a distance is called \textit{Mannheim distance} and it is given by $d_{M} = min \{ \mid x \mid + \mid y \mid + \mid z \mid + \mid w \mid \ \ \mid \ (x,y,z,w) \in \mathcal{C} \}$, where $\mathcal{C}$ indicates the set of the 729 codewords and $w_{M} (x,y,z,w) = \mid x \mid + \mid y \mid + \mid z \mid + \mid w \mid$ is known as the \textit{Mannheim weight} of $(x,y,z,w)$. 

\begin{theorem}\label{4distance}
The minimum Mannheim distance of the corresponding four-dimensional toric quantum code is given by $3$.
\end{theorem}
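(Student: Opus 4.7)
The plan is to mirror the structure of the proof of Theorem \ref{3distance}, extending the argument from three to four dimensions by peeling off one generator direction at a time. First, I would invoke the result from \cite{CibeleQIP} establishing that the codewords of each $9\times 9$ array inside a $9^{3}$ hypertorus, generated by the rows of the $2\times 2$ matrix $\left(\begin{smallmatrix} 1 & 6 \\ -1 & 3 \end{smallmatrix}\right)$, have minimum Mannheim distance equal to $3$. This gives the baseline in the two ``lowest'' coordinates.

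Next, I would examine the two additional generators introduced to build the higher-dimensional cross-sections: the vector $(0\ 1\ 1\ 1)$, used to rise in the third coordinate from one $9\times 9$ array to the next, and the vector $(1\ 0\ 0\ 2)$, used to rise in the fourth coordinate from one $9^{3}$ cross-section to the next. A direct computation shows that both have Mannheim weight exactly $3$, namely $|0|+|1|+|1|+|1|=3$ and $|1|+|0|+|0|+|2|=3$. Combining this with the bound from the $9\times 9$ arrays, I would then conclude, in the same spirit as Theorem \ref{3distance}, that the minimum Mannheim distance of the four-dimensional toric quantum code is $\min\{3,3,3\}=3$, and produce an explicit codeword realizing the value (e.g.\ the image of $(0\ 1\ 1\ 1)$ itself) to secure the upper bound.

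The main obstacle — which the analogous three-dimensional argument only glosses over — is certifying the lower bound, i.e.\ that no integer combination of the four rows of $A$, taken modulo $9\mathbb{Z}^{4}$, can produce a nonzero coset representative of Mannheim weight strictly less than $3$. To handle this rigorously I would invoke the perfect close-packing structure from \cite{perfcodes}: since the classical single-error-correcting code in the $9^{4}$ hypercubic lattice has each codeword sitting at the center of a disjoint Lee sphere of radius $1$ tiling the ambient lattice, any two distinct codewords are necessarily at Mannheim distance at least $3$. Under the group isomorphism of Theorem \ref{4LatticeCode} between this classical code and the lattice code $A\mathbb{Z}^{4}/9\mathbb{Z}^{4}$, this packing property transfers directly to the quantum code, ruling out any cross-cancellation that could yield a lighter codeword and closing the argument.
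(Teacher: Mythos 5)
Your proposal follows the same skeleton as the paper's proof: quote the minimum Mannheim distance $3$ for the $9\times 9$ arrays from \cite{CibeleQIP} (Lemmas 1--2 and Example 2), observe that the two additional generators $(0\ 1\ 1\ 1)$ and $(1\ 0\ 0\ 2)$ each have Mannheim weight $3$, and conclude that the minimum is $3$. Where you genuinely depart from the paper is in the lower bound. The paper stops at the observation ``each array has minimum distance $3$ and the extra generators have weight $3$, hence the minimum distance is $3$,'' which only certifies the upper bound and, as you correctly note, does not by itself exclude an integer combination of the rows of $A$ reducing modulo $9\mathbb{Z}^{4}$ to a nonzero vector of Mannheim weight $1$ or $2$. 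Your appeal to the perfect close-packing structure of \cite{perfcodes} closes this: disjoint Lee spheres of radius $1$ centered at the codewords force any two distinct codewords to lie at Lee (Mannheim) distance at least $2\cdot 1+1=3$, and the group isomorphism of Theorem \ref{4LatticeCode} transports this to the lattice code $A\mathbb{Z}^{4}/9\mathbb{Z}^{4}$. Combined with the explicit weight-$3$ codeword this pins the minimum distance at exactly $3$. In short, your argument proves strictly more than the paper's does: the paper's version buys brevity by leaning on the analogy with Theorem \ref{3distance}, while yours supplies the missing lower-bound certificate and is the one I would keep.
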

\begin{proof}
By following the reasoning of Theorem \ref{3distance}, from \cite{CibeleQIP}, the line vectors of the $2\times 2$ matrix $\left(
                 \begin{array}{cc}
                   1 & 6 \\
                   -1 & 3 \\
                 \end{array}
               \right)$ generate the 9 codewords of each $9\times 9$ array from the $9^{3}$ cross-sections (hypertorus). On the other hand, in \cite{CibeleQIP}, Lemma 1, Lemma 2 and Example 2 provide the minimum Mannheim distance of the codewords of each $9\times 9$ array which is given by 3.  
               
As we have seen previously we use the vectors $(0 \ 1 \ 1 \ 1)$ and $(1 \ 0 \ 0 \ 2)$ which are the corresponding generator in the vertical directions to rise in the third and four dimensions, respectively, to change from one array to the one above and from one cross-section to the other above, respectively, to continue the construction of the corresponding codewords.         

Since the minimum Mannheim distance of the codewords of each $9\times 9$ array is given by 3 and the Mannheim weight of $(0 \ 1 \ 1 \ 1)$ and $(1 \ 0 \ 0 \ 2)$ is 3, then the minimum Mannheim distance of the corresponding four-dimensional toric quantum code is given by $3$.
\end{proof}

From there, the parameters of the new four-dimensional toric quantum code are $[[6q=54,6,3]]$.

Consequently, under the algebraic point of view, such a new four-dimensional toric quantum code can be characterized as the group consisting of the cosets of $9 \mathbb{Z}^{4}$ in $A \mathbb{Z}^{4}$. 

As it was noted in Section \ref{three}, the duality of the lattice in which a toric quantum code is constructed influences the error correction pattern \cite{LivroBombin}, therefore, as the $\mathbb{Z}^{4}$-lattice is self-dual, then the respective quantum channel without memory is symmetric.

\section{New Three and Four-Dimensional Quantum Burst-Error-Correcting Codes}

The authors in \cite{CibeleQIP} provide a new class of toric quantum codes by constructing a classical cyclic code on the square lattice $\mathbb{Z}_{q}\times \mathbb{Z}_{q}$ for all odd integers $q\geq 5$ and, consequently, new toric quantum codes are constructed on such square lattices via a combinatorial method, that is, regardless of whether $q$ can be represented as a sum of two squares. Moreover, they propose a quantum interleaving technique by using these constructed two-dimensional toric quantum codes which shows that the code rate and the coding gain of the interleaved two-dimensional toric quantum codes are better than the code rate and the coding gain of Kitaev toric quantum codes for $q = 2n + 1$, where $n\geq 2$, and of an infinite class of Bombin and Martin-Delgado toric quantum codes.

The two-dimensional quantum interleaving technique proposed in \cite{CibeleQIP} is extended in this section to the dimensions three and four by applying it to the three and four-dimensional toric quantum codes previously obtained in Sections \ref{three} and \ref{four}, respectively. As a consequence, new three and four-dimensional quantum burst-error correcting codes are obtained which have better information rates and in the end of this section the appropriate comparisons are discussed and presented in tables. The following Theorem provides the parameters of these new three and four-dimensional quantum burst-error correcting codes.

\begin{theorem}
Consider $n_{D}=3,4$ and $q=7,9$, respectively, where $n_{D}$ is the dimension of the $q^{n_{D}}$ hypercubic lattice. The combination of the three and four-dimensional toric quantum codes provided in Sections \ref{three} and \ref{four}, respectively, and the interleaving technique results respectively in three and four-dimensional interleaved toric quantum codes with parameters $[[3q^{3},3q^{2},t_{i}=q]]$ ($q=7$) and $[[6q^{4},6q^{3},t_{i}=q]]$ ($q=9$), respectively, where $t_{i}$ is the interleaved toric quantum code error correcting capability.
\end{theorem}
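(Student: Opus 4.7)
The plan is to verify the three parameters of each interleaved code separately, treating the three-dimensional case as the prototype and indicating only the adaptations needed for four dimensions. The argument combines the lattice-code descriptions from Theorems~\ref{3LatticeCode} and~\ref{4LatticeCode}, the base distance computations of Theorems~\ref{3distance} and~\ref{4distance}, and the quantum interleaving scheme extended from~\cite{CibeleQIP}.

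First I would note that interleaving is, by construction, a permutation of the physical qubits, hence it preserves the total length. Since the full $q^{3}$ cubic lattice has exactly $3q^{3}$ faces, each carrying a single qubit, the interleaved code inherits length $n=3q^{3}$; the analogous hyperface count gives $n=6q^{4}$ in the four-dimensional case.

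Next I would compute the dimension by counting how many copies of the base toric code are packed into the ambient lattice. Theorem~\ref{3LatticeCode} shows that the $q^{3}$ cubic lattice is tiled by the $|A\mathbb{Z}^{3}/7\mathbb{Z}^{3}|=q^{2}$ disjoint heptacube cosets, each of which carries a copy of the $[[3q,3,3]]$ base toric code constructed in Section~\ref{three}. Because interleaving is a qubit permutation, the stabilizer formalism implies that the global code is stabilizer-equivalent to the direct sum of these $q^{2}$ base blocks, hence $k=3\cdot q^{2}=3q^{2}$. Four dimensions follows the same pattern: the $|A\mathbb{Z}^{4}/9\mathbb{Z}^{4}|=q^{3}$ Lee-sphere cosets of Theorem~\ref{4LatticeCode} each carry a $[[6q,6,3]]$ base code from Section~\ref{four}, producing $k=6q^{3}$.

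The step I expect to be the real bottleneck is the burst-correcting capability $t_{i}=q$. The idea is to linearize the qubits of the ambient lattice using the round-robin traversal implicit in the generator vectors $(0\ 1\ 4)$ and $(1\ 0\ 2)$ of Section~\ref{three}, so that consecutive qubits in the resulting ordering lie in distinct heptacube cosets. Once that separation is in place, any burst of $q$ consecutive erroneous qubits hits each base codeword in at most one coordinate, and since Theorem~\ref{3distance} guarantees Mannheim distance $3$ for every base block (hence single-error correction), applying the base decoder block-by-block recovers the transmitted logical state, giving $t_{i}=q$. The four-dimensional version uses the analogous linearization built from $(0\ 0\ 1\ 6)$, $(0\ 1\ 1\ 1)$, and $(1\ 0\ 0\ 2)$ together with Theorem~\ref{4distance}. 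The main obstacle is the combinatorial verification that the chosen linear order genuinely separates every length-$q$ window of qubits into $q$ distinct cosets; once that separation property is confirmed, the burst-correction guarantee reduces immediately to single-error correction inside each base block.
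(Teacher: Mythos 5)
Your skeleton is the same as the paper's: split the $q^{n_D}$ hypercubic lattice into $q^{n_D-1}$ fundamental-region blocks each carrying one copy of the base $[[\alpha q,\alpha,3]]$ code of Sections~\ref{three} and~\ref{four}, read off $n=\alpha q^{n_D}$ and $k=\alpha q^{n_D-1}$ by counting blocks, and reduce $t_i=q$ to single-error correction inside each block. The length and dimension counts are fine and match the paper. The genuine gap is in the third step, and it is twofold. First, you have set the problem up against the wrong error model: the paper does not define a burst as $q$ consecutive qubits in a linearization of the lattice, but as a spatial cluster having the shape of the Lee sphere of radius 1 in $n_D$ dimensions (one error per hypercube of the cluster). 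Proving that every length-$q$ window of a round-robin linear order meets $q$ distinct cosets would establish a different capability from the one the theorem claims, and neither statement follows formally from the other. Second, you explicitly leave the separation property unproved ("once that separation property is confirmed\ldots"), but that property \emph{is} the theorem; everything else is bookkeeping, so as written the proposal defers rather than supplies the proof.

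For comparison, the paper closes this step constructively rather than via a linear order: the $\alpha q^{n_D-1}$ qubits belonging to cross-section $j$ are all placed on the $j$-th hypercube of every fundamental-region tile (the central hypercube for $j=0$, a fixed one of the $q-1$ outer hypercubes for each subsequent $j$), so by design the $q$ hypercubes of any single tile carry qubits of $q$ distinct cross-sections, hence of $q$ distinct base blocks, and a tile-shaped cluster is corrected block-by-block. The lemma your proposal still needs -- and which also underlies the paper's assertion for a cluster that is not aligned with the tiling -- is that \emph{every} translate of the radius-1 Lee sphere meets $q$ distinct position-labels: if two of its hypercubes occupied the same position in their respective tiles, their difference would be a nonzero vector of the code lattice of Mannheim weight at most $2$, contradicting the minimum Mannheim distance $3$ established in Theorems~\ref{3distance} and~\ref{4distance} (equivalently, the perfect-packing property of \cite{perfcodes}). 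If you want to keep the linearization framing, you must both prove that window property and then show it implies correction of Lee-sphere-shaped clusters; otherwise, replace it by the positional argument above.
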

\begin{proof}
The parameters of the three and four-dimensional toric quantum codes presented in Sections \ref{three} and \ref{four}, respectively, are given respectively by $[[3q=21,k=3,d_{M}=3]]$ ($q=7$) and $[[6q=54,k=6,d_{M}=3]]$ ($q=9$). A toric quantum code with minimum distance $d_{M}$ is able to correct up to $t$ errors, where $t=\lfloor \frac{d_{M}-1}{2} \rfloor$ \cite{ClariceArtigo}, therefore, such codes are able to correct up to $t=1$ error. 

We assume that the clusters of errors have the shape of the Lee sphere of radius 1 in $n_{D}$ dimensions and the qubits are in a biunivocal correspondence with the faces (hyperfaces) of the $q^{n_{D}}$ hypercubic lattice. Figure \ref{fig9} shows the model of an storage system under consideration.
\begin{figure}[ht]%
\centering
\includegraphics[width=0.7\textwidth]{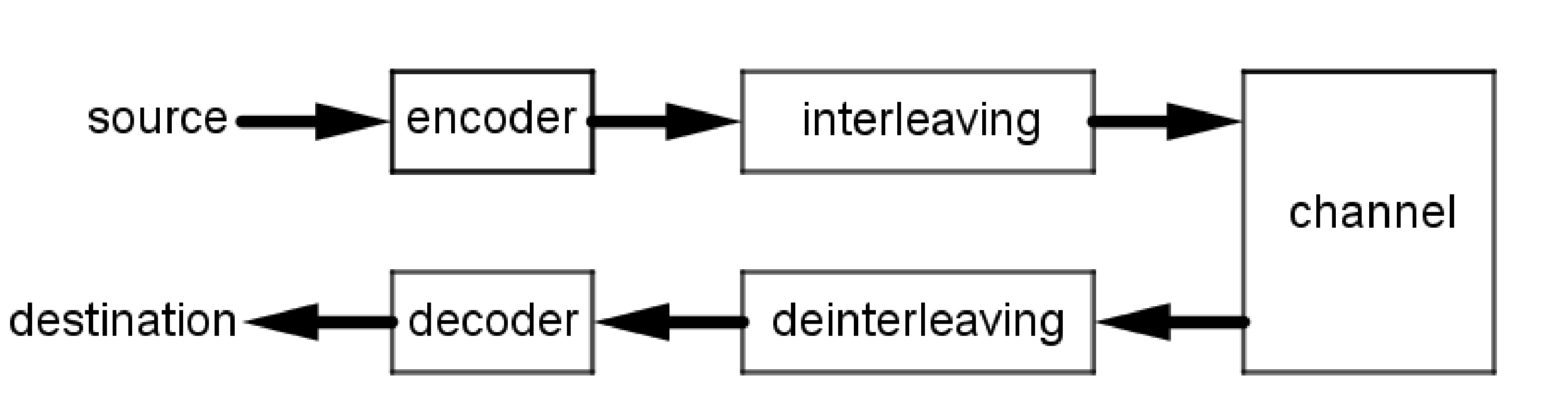}
\caption{Model of an storage system.}
\label{fig9}
\end{figure}

Consider the $q^{n_{D}}$ hypercubic lattice, where $n_{D}=3,4$ and $q=7,9$, respectively. Such a hypercubic  lattice has a total of $\alpha q^{n_{D}}$ qubits (hyperfaces), where $\alpha=3$ for $q=7$ and $n_{D}=3$, and $\alpha=6$ for $q=9$ and $n_{D}=4$. From now on we explain how to spread such $\alpha q^{n_{D}}$ adjacent qubits in the $q^{n_{D}}$ hypercubic lattice. Observe that the three and four-dimensional toric quantum codes presented in Sections \ref{three} and \ref{four}, respectively, in the respective $q^{n_{D}}$ hypercubic lattice consist of $q^{(n_{D}-1)}$ codewords whose fundamental region is the Lee sphere of radius 1 in $n_{D}$ dimensions which has $q$ hypercubes and $\alpha q$ qubits (hyperfaces).   

Thereby the first $\alpha q^{(n_{D}-1)}$ adjacent qubits which are related to the cross-section $j=0$ are spread over the hypercubes that correspond to the $q^{(n_{D}-1)}$ codewords of the $q^{n_{D}}$ hypercubic lattice as it follows: firstly, ordering such adjacency consists of using the order of the codewords of the cross-section $j=0$ where such an order is featured by the generator vectors that are used in the generation of the codewords of each cross-section $j$ ($j=0,1,\ldots,q-1$); observe that these generator vectors are the same for generating the corresponding codewords of each cross-section $j$. Thenceforward, the first $q^{(n_{D}-1)}$ qubits (hyperfaces) are placed on the hypercubes that correspond to the $q^{(n_{D}-1)}$ codewords of the $q^{n_{D}}$ hypercubic lattice; notice that it is needed to follow the sequence of the codewords of the respective code constructed over the $q^{n_{D}}$ hypercubic lattice by using all the generator lattice vectors and initiating with the null lattice vector (null codeword) which belongs to the cross-section $j=0$. The other $\alpha-1$ blocks of $q^{(n_{D}-1)}$ qubits are placed in the same way on the same hypercubes (related to the $q^{(n_{D}-1)}$ codewords of the code).   

Each cross-section $j$ ($j=0,1,\ldots,q-1$) has $q^{(n_{D}-2)}$ codewords and each codeword has $\alpha q$ qubits, then each cross-section $j$ has $\alpha q^{(n_{D}-1)}$ qubits. Now the other $q-1$ blocks with $\alpha q^{(n_{D}-1)}$ adjacent qubits which correspond to the cross-sections $j=1,2,\ldots,q-1$, respectively, are spread across the $q^{n_{D}}$ hypercubic lattice analogously, that is, suppose that we wish to spread the next block ($j=1$) with $\alpha q^{(n_{D}-1)}$ adjacent qubits (hyperfaces) across the $q^{n_{D}}$ hypercubic lattice. As the corresponding fundamental region is the Lee sphere of radius 1 in $n_{D}$ dimensions which has $q$ hypercubes by including the one related to the codeword, then we start the corresponding arrangement by placing the first qubit of the $q^{(n_{D}-1)}$ first qubits in one of the $q-1$ hypercubes related to the Lee sphere of radius 1 of the null codeword (the hypercube related to the null codeword was used to spread the qubits of the cross-section $j=0$) and, then, we use all the generator lattice vectors to follow the sequence of the respective codewords to spread the other $q^{(n_{D}-1)}-1$ qubits over the corresponding hypercubes. The other $\alpha-1$ blocks of $q^{(n_{D}-1)}$ qubits are placed in the same way on the same hypercubes. 

Consequently, the $\alpha q^{(n_{D}-1)}$ qubits of each cross-section $j=2,\ldots,q-1$ are spread analogously and correspond to a hypercube of the $q-2$ remaining hypercubes of the Lee sphere of radius 1. Therefore such an interleaving technique shows that the qubits of a certain hypercube of the $q^{(n_{D}-1)}$ codewords of the $q^{n_{D}}$ hypercubic lattice are the $\alpha q^{(n_{D}-1)}$ qubits of the codewords of a cross-section $j$ ($j=0,1,\ldots,q-1$). 

Since the three and four-dimensional toric quantum codes presented in Sections \ref{three} and \ref{four}, respectively, are able to correct up to $t=1$ error, then by using the interleaving method previously described it is possible to correct up to $q$ errors in burst from the total of $\alpha q^{(n_{D})}$ qubits. In fact, assume that a cluster of $q$ errors in the quantum channel has the shape of the corresponding fundamental region which is the Lee sphere of radius 1. Therefore, when the deinterleaving process is applied, each one of these $q$ errors occurs, respectively, in a codeword of a different cross-section $j$ ($j=0,1,\ldots,q-1$) and then the respective three and four-dimensional toric quantum codes are applied to correct these $q$ errors in burst. 

Therefore, in this section, it is shown that the combination of the three and four-dimensional toric quantum codes provided in Sections \ref{three} and \ref{four}, respectively, whose parameters are  $[[3q=21,k=3,d_{M}=3]]$ ($q=7$) and $[[6q=54,k=6,d_{M}=3]]$ ($q=9$), respectively, and the presented interleaving method results, respectively, in three and four-dimensional interleaved toric quantum codes whose parameters are given respectively by $[[3q^{3},3q^{2},t_{i}=q]]$ ($q=7$) and $[[6q^{4},6q^{3},t_{i}=q]]$ ($q=9$), where $t_{i}$ is the interleaved toric quantum code error correcting capability. As a consequence, new three and four-dimensional quantum burst-error correcting codes are obtained by applying this interleaving method to the respective three and four-dimensional toric quantum codes. \end{proof}

The code rate \cite{rates} and the coding gain \cite{rates} are given by $R=\dfrac{k}{n}$ and $G=\dfrac{k}{n}(t+1)$, respectively, where $t$ is the toric quantum code error correcting capability. 

Table \ref{T1} shows that the code rate and the coding gain (in dB) of the three and four-dimensional toric quantum codes provided in Sections \ref{three} and \ref{four}, respectively, are better than the code rate and the coding gain of the three and four-dimensional toric quantum codes from the literature whose parameters are $[[3q^{3},3,t=3]]$ ($q=7$) \cite{Temperature} and $[[6q^{4},6,t=40]]$ ($q=9$) \cite{4D}, respectively. 

\begin{table}[h!]
\begin{center}
\caption{Code rate and coding gain of the three and four-dimensional toric quantum codes from the literature \cite{Temperature,4D} and Sections \ref{three} and \ref{four}, respectively}
\label{T1}
\begin{tabular}{| c | c | c |}
\hline
3D and 4D & Code Rate & Coding Gain \\
Toric Quantum Codes & $R=\dfrac{k}{n}$ & $G=\dfrac{k}{n}(t+1)$ (dB) \\
\hline
$[[3q^{3},3,t=3]]$ ($q=7$) & 0,00292 & 0,01168 \\
\hline
$[[6q^{4},6,t=40]]$ ($q=9$) & 0,000152 & 0,006232 \\ 
\hline
$[[3q=21,k=3,t=1]]$ ($q=7$) & 0,1429 & 0,2858 \\
\hline
$[[6q=54,k=6,t=1]]$ ($q=9$) & 0,1111 & 0,2222 \\
\hline 
\end{tabular}
\end{center}
\end{table}

Table \ref{T2} shows the equivalent interleaved three and four-dimensional toric quantum codes (three and four-dimensional quantum burst-error-correcting codes) and their corresponding interleaving code rates $R_{i}$ and coding gains $G_{i}$ (in dB) when the cluster of errors has the shape of the corresponding Lee sphere of radius 1 over the $q^{n_{D}}$ hypercubic lattice. Notice that the coding gain of the three and four-dimensional quantum burst-error correcting codes is better than the coding gain of the three and four-dimensional toric quantum codes provided in Sections \ref{three} and \ref{four}, respectively, and the code rate of the three and four-dimensional quantum burst-error correcting codes is equal to the code rate of the three and four-dimensional toric quantum codes provided in Sections \ref{three} and \ref{four}, respectively. 

Moreover, the code rate and the coding gain (in dB) of the three and four-dimensional quantum burst-error correcting codes are better than the code rate and the coding gain of the three and four-dimensional toric quantum codes from the literature whose parameters are $[[3q^{3},3,t=3]]$ ($q=7$) \cite{Temperature} and $[[6q^{4},6,t=40]]$ ($q=9$) \cite{4D}, respectively.   

\begin{table}[h!]
\begin{center}
\caption{Code rate and coding gain of the three and four-dimensional quantum burst-error-correcting codes from the interleaving method}
\label{T2}
\begin{tabular}{| c | c | c |}
\hline
Interleaved 3D and 4D & Code Rate & Coding Gain \\
Toric Quantum Codes & $R_{i}=\dfrac{k}{n}$ & $G_{i}=\dfrac{k}{n}(t_{i}+1)$ (dB) \\
\hline
$[[3q^{3},3q^{2},t_{i}=q]]$ ($q=7$) & 0,1429 & 1,1432 \\
\hline
$[[6q^{4},6q^{3},t_{i}=q]]$ ($q=9$) & 0,1111 & 1,111 \\ 
\hline 
\end{tabular}
\end{center}
\end{table}

\section{Conclusion}

References \cite{ijam,celso} provide effective interleaving techniques for combating burst of errors by using classical codes. However, to the best of our knowledge, little is known regarding interleaving techniques for combating cluster of errors in toric quantum codes. While in classical interleaving \cite{ijam,celso} each face represents a bit and, therefore, the bits sequentially allocated to each face are shifted according to the generator of the group, and so on, in the quantum interleaving used in this work the use of this sight to qubits which are allocated in each fundamental region is presented consistently. In addition, in the classical interleaving techniques \cite{ijam,celso} several classical error-correcting codes are needed to correct the corresponding errors in burst; now, in this work, we use a quantum interleaving method which provides a single quantum burst-error-correcting code in dimensions three and four that has better information rates than those three and four-dimensional toric quantum codes from the literature.

\section{Acknowledgments}\nonumber
\noindent The authors would like to thank the financial Brazilian agency CNPq (Conselho Nacional de Desenvolvimento Cient\'{i}fico e Tecnol\'{o}gico - Brazil) for the funding support and under grant no. 101862/2022-9 (chamada CNPq 25/2021, PDS 2021).

\end{document}